\def\id{{\rm 1\kern-.22em l}}
\newcommand{\ii}{\operatorname{in}}
\newcommand{\ou}{\operatorname{out}}
\newcommand{\rme}{\rho_{\operatorname{ME}}}
\newcommand{\ud}{\operatorname{U}(d)}
\newtheorem{lem}{Lemma}
\newtheorem{theorem}{Theorem}
\newtheorem{proposition}{Proposition}
\begin{document}
\author{Marcin Markiewicz}
\affiliation{Institute of Theoretical and Applied Informatics, Polish Academy of
Sciences, ul. Ba{\l}tycka 5, 44-100 Gliwice, Poland} \affiliation{International
Centre for Theory of Quantum Technologies (ICTQT), University of Gdansk, 80-308
Gdansk, Poland}
\orcid{0000-0002-8983-9077}
\email{mmarkiewicz@iitis.pl}
\author{{\L}ukasz Pawela} \affiliation{Institute of Theoretical and Applied
Informatics, Polish Academy of Sciences, ul. Ba{\l}tycka 5, 44-100 Gliwice,
Poland}
\orcid{0000-0002-0476-7132}
\author{Zbigniew Pucha{\l}a} \affiliation{Institute of Theoretical and Applied
Informatics, Polish Academy of Sciences, ul. Ba{\l}tycka 5, 44-100 Gliwice,
Poland}
\orcid{0000-0002-4739-0400}
\title{Choi-level twirling of quantum channels: finite constructions and non-compact transformations}

\begin{abstract}
Twirling, i.e.\ averaging over symmetry actions, is a standard tool for reducing quantum states and channels to a symmetry-invariant form. 
We study channel twirling from the perspective of the channel--state duality and provide a constructive Choi-level description of the averaging map induced by arbitrary input/output representations.
Our main technical result concerns the collective setting: for $\pi^{\ii}(U)=U^{\otimes t_{\ii}}$ and $\pi^{\ou}(U)=U^{\otimes t_{\ou}}$, we introduce a partial-transpose reduction that removes the contragredient action and converts the mixed (walled Brauer) channel twirl into an \emph{ordinary} Schur--Weyl twirl of the partially transposed Choi operator under $U^{\otimes(t_{\ii}+t_{\ou})}$, enabling explicit permutation-based formulas without constructing walled Brauer idempotents or mixed Schur transforms.
Beyond compact symmetries, we extend channel twirling to reductive, generally non-unitary groups via Cartan decomposition and obtain an invariant-sector decomposition of the averaged Choi operator with weights determined solely by the Abelian Cartan component.
Finally, we provide two finite realizations of channel averaging: a ``dual'' implementation as a convex mixture of unitary-$1$-design channels acting on invariant sectors, and a design-like reconstruction showing that weighted group $t$-designs induce channel $t$-designs for $t=t_{\ii}+t_{\ou}$.
\end{abstract}

\maketitle

\section{Introduction}

\subsection{General Motivation}
Averaging procedures based on symmetry groups play a central role in quantum information theory \cite{Bartlett03a, Bartlett07}. 
For quantum states, uniform averaging (twirling) with respect to a group action provides a systematic way of reducing descriptions to symmetry-invariant data, with applications ranging from entanglement theory to randomized benchmarking and noise-invariant encoding \cite{Zanardi97, Eggeling01, Bae19,Gross21, Nakata21, Markiewicz23}.
For quantum channels, averaging is typically defined operationally via uniform pre- and post-processing by symmetry transformations, leading to effective noise models such as depolarizing or  stochastic channels \cite{Bartlett07,Miatto12,Winter21,Graydon22, Kong22, Nechita25}.

Despite its widespread use, channel averaging is most often treated at the level of operational definitions and specific examples.
The Choi--Jamio\l{}kowski isomorphism provides a natural representation of quantum channels as bipartite operators. While this framework is invaluable for tasks such as defining operational distance measures~\cite{puchala2011experimentally} and discriminating quantum operations~\cite{krawiec2020discrimination}, a general constructive Choi-level formula that works uniformly for arbitrary collective tensor actions and input/output transformations remains largely absent from the literature. Consequently, practical evaluation often relies on case-by-case representation-theoretic machinery.
Existing approaches either focus on special classes of symmetries, such as finite or compact groups with irreducible representations, or rely on case-by-case constructions tailored to specific physical settings.

A natural expectation is that channel twirling should correspond to a projection onto symmetry-invariant subspaces of the Choi operator.
However, making this correspondence precise for arbitrary input and output transformations, and doing so in a way that is both conceptually transparent and practically usable, turns out to be nontrivial.
This difficulty becomes particularly apparent in collective scenarios, where channels act on multiple identical subsystems and the relevant symmetry algebra is no longer given by the symmetric group alone (as in the case of twirling of quantum states) \cite{Markiewicz23}, but by the commutant of the tensor product action $U^{\otimes p}\otimes \bar{U}^{\otimes q}$, namely the walled Brauer algebra \cite{Dipper08}.

\subsection{State of the art: covariant and equivariant channels, and mixed Schur--Weyl tools.}
Symmetry constraints on quantum channels are most commonly formulated in terms of covariance\footnote{It is important to distinguish two related but different notions:
\emph{covariance/equivariance} is a symmetry \emph{constraint} defining a class of channels,
whereas \emph{twirling} is an \emph{averaging procedure} that maps an arbitrary channel to a symmetric one.}.
A channel $\Phi:\mathcal{B}(\mathcal{H}_{\mathrm{in}})\to\mathcal{B}(\mathcal{H}_{\mathrm{out}})$ is called $\mathcal G$-covariant if there exist (typically unitary) representations $\pi_{\mathcal{G}}^{\textrm{in}}$ and $\pi_{\mathcal{G}}^{\textrm{out}}$ of the symmetry group $\mathcal G$, such that
\begin{equation}
\Phi\bigl(\pi_{\mathcal{G}}^{\textrm{in}}(g)\,\rho\,\pi_{\mathcal{G}}^{\textrm{in}}(g)^\dagger\bigr)
= \pi_{\mathcal{G}}^{\textrm{out}}(g)\,\Phi(\rho)\,\pi_{\mathcal{G}}^{\textrm{out}}(g)^\dagger
\qquad \forall g\in \mathcal G.
\end{equation}
It is a well-known consequence of the Choi--Jamio\l{}kowski
isomorphism that this covariance condition is equivalent to a commutation
relation at the Choi level \cite{Studzinski17,Winter21,Grinko2024}:
\begin{equation}
\bigl[\,
\pi_{\mathcal G}^{\textrm{out}}(g)\otimes \bar\pi_{\mathcal G}^{\textrm{in}}(g),
\, J_{\Phi}
\,\bigr]=0
\qquad \forall g\in\mathcal G,
\end{equation}
that is, the Choi operator must lie in the commutant of the induced
representation $\pi_{\mathcal G}^{\textrm{out}}\otimes
\bar\pi_{\mathcal G}^{\textrm{in}}$.

When the input and output representations are identical, or related in a direct
``same-action'' manner (e.g.\ $U^{\otimes p}$ on the input and $U^{\otimes q}$
on the output for $U\in \mathrm{U}(d)$), the literature often uses the term
\emph{equivariance} to emphasize that applying the symmetry to the input
corresponds to applying (essentially) the same symmetry to the output.
This distinction is especially natural in the collective setting of many
identical subsystems, where one studies channels whose Choi operators commute
with the tensor action $U^{\otimes q}\otimes \bar{U}^{\otimes p}$.

A substantial body of work analyzes the structure and applications of such symmetry-restricted channels.
For finite groups, a detailed characterization is available in the irreducibly covariant case, where representation-theoretic data yields explicit decompositions of the Choi operator and necessary and sufficient conditions for complete positivity and trace preservation \cite{Studzinski17, Mozrzymas18}.
For continuous unitary symmetries and multipartite settings, mixed Schur--Weyl duality identifies the commutant of $U^{\otimes q}\otimes \bar{U}^{\otimes p}$ with the walled Brauer algebra, providing a powerful language for block diagonalization and parametrization \cite{Nguyen23, Grinko2024,Studzinski25}.
This perspective underlies modern optimization frameworks: unitary-equivariant semidefinite programs can be reduced to linear programs by expressing feasible operators in a walled Brauer basis and exploiting additional symmetry constraints \cite{Grinko23, Grinko2024}.
On the quantum-algorithmic side, efficient circuit constructions for the mixed Schur transform enable implementations and manipulations of unitary-equivariant channels in a variety of operational settings \cite{Nguyen23,Nguyen24}.

In parallel, channel twirling is a standard operational tool, defined by averaging pre- and post-processing operations over a group action \cite{Bartlett07,Miatto12,Winter21,Graydon22, Kong22, Nechita25}.
Connections to unitary designs are well understood in several important cases.
In particular, twirling by a unitary $1$-design yields a stochastic channel, although the resulting channel may depend on the specific choice of the design \cite{Graydon22}.
Much of the literature, however, focuses on concrete symmetry models or assumes the invariant decomposition a priori, rather than providing a general and constructive description of channel twirling directly at the level of the Choi operator.

Despite extensive work on covariant channels and mixed Schur--Weyl duality \cite{Nguyen23, Grinko23, Grinko23gt,Grinko2024,Studzinski25,Nechita25}, a general and constructive description of channel twirling at the Choi level, especially beyond compact symmetries, has been missing. Moreover in much of the literature on this topic the symmetry is imposed as a structural constraint on $\Phi$ (equivariant/covariant channels),
whereas here we focus on the \emph{averaging map} that sends an arbitrary $\Phi$ to its symmetry-averaged counterpart and we make this map constructive at the Choi level.

\subsection{Outline}

In this work, we develop a unified framework for averaging quantum channels
based on the channel--state duality. Starting from the standard definition of
channel twirling as uniform pre- and post-processing by symmetry operations,
we show that channel averaging can be expressed exactly as a group twirl acting
directly on the Choi operator. More precisely, the Choi operator of a twirled
channel is obtained by averaging the original Choi operator with respect to the
induced representation on the joint output--input space.

Although this correspondence between operational twirling and a Choi-level
group action appears implicitly or explicitly in previous works, we elevate it
to a central structural principle. In our approach, the Choi-level formulation
is not merely a reformulation, but the fundamental tool that enables a uniform
treatment of arbitrary representations of  input and output symmetry transformations, collective
tensor actions, and non-compact operations.

Viewed in this way, channel twirling becomes an explicit projection onto the
commutant of the induced symmetry representation, placing channel averaging
exactly within the framework of generalized Schur--Weyl duality \cite{Brundan08, Marvian14, Zhang15, Gross21}. This
structural perspective is what allows us to identify invariant-sector decompositions, derive finite-sum formulas, and extend the formalism beyond the
compact unitary setting.

In the important collective setting, where the symmetry group acts identically on multiple input and output subsystems, the commutant algebra is given by the walled Brauer algebra.
To avoid the technical complexity associated with constructing explicit Brauer idempotents, we introduce a partial-transpose reduction that maps channel twirling to an ordinary Schur--Weyl twirl of a partially transposed Choi operator.
As a result, the evaluation of collective channel twirls can be reduced to standard permutation-based formulas,
enabling explicit formulas in terms of permutation operators and Schur projectors, without constructing walled Brauer idempotents or mixed Schur transforms.

Genuinely beyond the symmetries represented by compact groups, we extend our framework to averaging over arbitrary finite-dimensional representations of reductive, generally non-compact groups.
Using Cartan decomposition, we show that the resulting channel twirl admits a decomposition into invariant-sector projections with weights determined solely by the Abelian component of the group.
This provides a controlled and well-defined notion of channel averaging beyond the unitary setting, where naive uniform averaging is not available.

Finally, we address finite realizations of channel averaging.
We introduce a dual averaging representation in which the twirled channel is expressed as a convex mixture of unitary $1$-design channels acting on invariant sectors.
We further establish a design-like reconstruction result showing that weighted group $t$-designs induce channel $t$-designs for $t = t_{\mathrm{in}} + t_{\mathrm{out}}$, allowing exact reconstruction of the twirled Choi operator from a finite sum.

The results presented here provide a general and constructive description of channel averaging that unifies and extends existing approaches.
They offer both conceptual clarification of the role of symmetry in quantum channels and practical tools for analyzing and implementing averaged channels in a wide range of settings.

\subsection{Notation}
Firstly let us introduce notation inspired by
previous works on averaging of quantum states \cite{Markiewicz23, Markiewicz25}.
In this work we assume that the dimensions of elementary systems constituting
input and output subspaces are equal, however the number of elementary
subsystems can be different for input and output spaces. Hence the input and
output Hilbert spaces are specifed respectively as: $\mathcal H_{\ii}=(\mathbb
C^d)^{\otimes t_{\ii}}$ and $\mathcal H_{\ou}=(\mathbb C^d)^{\otimes t_{\ou}}$.
We consider arbitrary channels $\Phi:\mathcal B(\mathcal H_{\ii})\mapsto
\mathcal B(\mathcal H_{\ou})$ mapping $t_{\ii}$ qu-$d$-its to $t_{\ou}$
qu-$d$-its.

Following \cite{Markiewicz25} let us introduce two arbitrary unitary
representations $\pi_{\operatorname{U}}^{\ii}$ and
$\pi_{\operatorname{U}}^{\ou}$ of a unitary group $\ud$, acting respectively on
$\mathcal B(\mathcal H_{\ii})$ and $\mathcal B(\mathcal H_{\ou})$ via maps:
\begin{eqnarray}
    \label{def:T}
    \mathcal
    T_U^{\ii}(\sigma_{\ii})&=&\pi^{\ii}_{\operatorname{U}}(U)\sigma_{\ii}
    \pi^{\ii}_{\operatorname{U}}(U)^{\dagger},\,\,\sigma_{\ii}\in\mathcal
    B(\mathcal H_{\ii})\nonumber\\
    \mathcal T_U^{\ou}(\sigma_{\ou})&=&\pi^{\ou}_{\operatorname{U}}(U)\sigma_{\ou} \pi^{\ou}_{\operatorname{U}}(U)^{\dagger},\,\,\sigma_{\ou}\in\mathcal B(\mathcal H_{\ou}).
\end{eqnarray}
Clearly we have: $\mathcal
T_{U^{-1}}^{\ii}(\sigma_{\ii})=\pi^{\ii}_{\operatorname{U}}(U)^{\dagger}\sigma_{\ii}
\pi^{\ii}_{\operatorname{U}}(U)$, since
$\pi^{\ii}_{\operatorname{U}}(U^{-1})=\pi^{\ii}_{\operatorname{U}}(U)^{-1}=\pi^{\ii}_{\operatorname{U}}(U)^{\dagger}$.
The last property holds because any representation is a group homomorphism.
Following ideas of \cite{Bartlett07, Miatto12, Graydon22, Grinko2024}, let us define twirling of quantum channels as follows:
\begin{equation}
    \label{def:GU}
    \mathcal M^{\Phi}_{\operatorname{U}}(\rho)=\int_{\ud}\operatorname{d}\!U\,\mathcal T_U^{\ou}\circ \Phi \circ \mathcal T_{U^{-1}}^{\ii}(\rho)=\int_{\ud}\operatorname{d}\!U\,\mathcal T_U^{\ou}\left(\Phi\left(\mathcal T_{U^{-1}}^{\ii}(\rho)\right)\right),
\end{equation}
in which the integration is performed with respect to a Haar measure on the unitary group. Further on we will generalize the above averaging procedure to the case of non-compact transformations.

\subsection{Main results}

Our contributions can be summarized as follows.

\begin{enumerate}
\item In the collective setting, we introduce a partial-transpose reduction that removes the contragredient action and converts the mixed Schur--Weyl (walled Brauer) twirl into an \emph{ordinary} Schur--Weyl twirl on $t_{\ii}+t_{\ou}$ tensor factors, enabling permutation-based formulas without constructing walled Brauer idempotents.
\item We extend channel twirling beyond compact symmetries to reductive (generally non-compact) groups via Cartan decomposition, obtaining a weighted invariant-sector decomposition with weights determined solely by the Abelian Cartan component.
\item We provide two finite realizations of the channel averaging process: a ``dual averaging'' implementation as a convex mixture of unitary-$1$-design channels on invariant sectors, and a $t$-design-like reconstruction result showing that weighted group $t$-designs induce channel $t$-designs for $t=t_{\ii}+t_{\ou}$.
\end{enumerate}

The structural Choi-level identity underpinning channel twirling
(Lemma~\ref{lem:choiform}) records the standard induced action arising from
pre-/post-processing and channel--state duality. Our main technical novelty
begins in the \emph{collective} setting, where Lemma~\ref{lem:UBar} yields a
partial-transpose reduction that removes the contragredient action and converts
the mixed (walled Brauer) twirl into an ordinary Schur--Weyl twirl on
$t_{\ii}+t_{\ou}$ tensor factors. This enables explicit permutation-operator
formulas without constructing walled Brauer idempotents or invoking mixed Schur
transforms.

\paragraph{Partial-transpose reduction (collective setting).}
\begin{theorem}[Collective channel twirl reduces to an ordinary Schur--Weyl twirl, Section \ref{sec:unitary}]
Assume the collective representations $\pi_{\mathrm{U}}^{\ii}(U)=U^{\otimes t_{\ii}}$ and
$\pi_{\mathrm{U}}^{\ou}(U)=U^{\otimes t_{\ou}}$.
Let $\Gamma$ denote partial transpose on the input factor of $\mathcal H_{\ou}\otimes\mathcal H_{\ii}$.
Then the Choi operator of the Haar-twirled channel satisfies
\[
J_{\mathcal M^{\Phi}_{\mathrm U}}
=
\left[
\int_{\ud} \mathrm dU\;
\Big(U^{\otimes (t_{\ii}+t_{\ou})}\Big)\,J_\Phi^\Gamma\,
\Big(U^{\otimes (t_{\ii}+t_{\ou})}\Big)^\dagger
\right]^\Gamma .
\]
Consequently, evaluating $\mathcal M^{\Phi}_{\mathrm U}$ can be reduced to a standard Schur--Weyl projection of $J_\Phi^\Gamma$, so one may work directly with permutation operators on $(t_{\ii}+t_{\ou})$ tensor factors.
\end{theorem}

\paragraph{Beyond compact symmetries: reductive (Cartan) twirl.}
\begin{theorem}[Reductive (Cartan) channel twirl and invariant-sector decomposition, Section \ref{sec:nonunitary}]
Let $\mathcal G$ be a reductive group with Cartan decomposition $\mathcal G = \mathcal K\,\mathcal A\,\mathcal K$,
and let $\pi_{\mathcal G}^{\ii/\ou}$ be finite-dimensional matrix representations on $\mathcal H_{\ii/\ou}$.
Define channel twirling by integrating over $\mathcal K\times\mathcal A\times\mathcal K$ (with normalized Abelian component) and using the adjoint action on the input side.
Then the averaged Choi operator admits an invariant-sector decomposition
\[
J_{\mathcal M^{\Phi}_{\mathcal G}}
=
\sum_{k}\frac{\beta_k}{D_k}\;\mathcal P_k(J_\Phi),
\]
where $\{\mathcal P_k\}$ are commutant-sector projections determined by the restriction of
$\pi_{\mathcal G}^{\ou}\otimes\bar{\pi}_{\mathcal G}^{\ii}$ to the maximal compact subgroup $\mathcal K$, while the coefficients $\beta_k$ depend only on the Abelian component $\mathcal A$ via a real-valued integral.
Moreover, in the collective case one may again express $\mathcal P_k$ using ordinary Schur--Weyl permutation operators after applying the partial transpose trick.
\end{theorem}

\paragraph{Finite realizations: dual averaging and induced channel designs.}
\begin{theorem}[Dual averaging as a mixture of unitary-$1$-design channels, Section \ref{sec:dual}]
The Cartan twirl $\mathcal M_{\mathcal G}^\Phi$ can be represented as a probabilistic mixture of unitary-$1$-design channels acting on invariant sectors:
\[
J_{\mathcal M_{\mathcal G}^\Phi}
=
\sum_k \frac{\beta_k}{D_k}\;\frac{1}{(D_k^G)^2}\sum_{l=1}^{(D_k^G)^2}\tilde{\gamma}^{(k)}_l\,J_\Phi\,\tilde{\gamma}^{(k)\dagger}_l,
\]
where $\{\tilde{\gamma}^{(k)}_l\}$ are Kraus operators obtained by embedding a unitary operator basis on respective irreducible subspaces into the full representation space.
\end{theorem}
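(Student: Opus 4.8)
The plan is to reduce everything to Theorem~3 and then realize each commutant‑sector projection $\mathcal P_k$ as one explicit finite unitary‑$1$‑design channel supported on the corresponding irreducible block. By Theorem~3 we already have $J_{M_{\mathcal G}^\Phi}=\sum_k \tfrac{\beta_k}{D_k}\,\mathcal P_k(J_\Phi)$, so it suffices to prove, for every sector label $k$,
\[
\mathcal P_k(J_\Phi)=\frac{1}{(D_k^G)^2}\sum_{l=1}^{(D_k^G)^2}\tilde\gamma^{(k)}_l\,J_\Phi\,\tilde\gamma^{(k)\dagger}_l ,
\]
and then substitute back. Here $D_k^G$ is the dimension of the $k$-th irreducible representation appearing in the restriction of $\pi^{\rm out}_{\mathcal G}\otimes\bar\pi^{\rm in}_{\mathcal G}$ to the maximal compact subgroup $\mathcal K$.

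\textbf{Step 1: explicit form of $\mathcal P_k$.}
First I would recall that the $\mathcal K$-restriction decomposes $\mathcal H_{\rm out}\otimes\mathcal H_{\rm in}$ isotypically as $\bigoplus_k \mathcal M_k\otimes\mathcal V_k$, with $\mathcal M_k$ the multiplicity space and $\mathcal V_k$ the carrier of the $k$-th $\mathcal K$-irrep, $\dim\mathcal V_k=D_k^G$. Let $\Pi_k$ be the projector onto $\mathcal M_k\otimes\mathcal V_k$ and $V_k:\mathcal M_k\otimes\mathcal V_k\hookrightarrow\mathcal H_{\rm out}\otimes\mathcal H_{\rm in}$ the associated isometry ($V_kV_k^\dagger=\Pi_k$, $V_k^\dagger V_k=\id$), built out of the duality basis operators used in Theorem~3. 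The commutant‑sector projection is then the conditional expectation
\[
\mathcal P_k(X)=V_k\!\left(\tfrac{1}{D_k^G}\,\mathrm{Tr}_{\mathcal V_k}\!\big(V_k^\dagger X V_k\big)\otimes\id_{\mathcal V_k}\right)\!V_k^\dagger ,
\]
i.e.\ $\mathcal P_k$ compresses $X$ to block $k$ (killing off‑block coherences), traces out the $\mathcal V_k$-factor, and restores it in maximally mixed form; the $1/D_k^G$ is exactly what makes $\mathcal P_k$ idempotent and trace‑preserving on $\Pi_k\mathcal H$. (If Theorem~3 distributes the constants between $D_k$ and $D_k^G$ differently, the same computation goes through with the bookkeeping adjusted.)

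\textbf{Step 2: embedded unitary operator basis.}
Introduce a unitary operator basis (nice error basis, e.g.\ the Weyl–Heisenberg operators) $\{W_l\}_{l=1}^{(D_k^G)^2}$ on $\mathbb C^{D_k^G}$ with $\mathrm{Tr}(W_l^\dagger W_{l'})=D_k^G\,\delta_{ll'}$; such a basis exists in every dimension and is a unitary $1$-design, so that for any $Y\in\mathcal B(\mathcal M_k\otimes\mathcal V_k)$,
\[
\frac{1}{(D_k^G)^2}\sum_{l=1}^{(D_k^G)^2}(\id_{\mathcal M_k}\otimes W_l)\,Y\,(\id_{\mathcal M_k}\otimes W_l)^\dagger=\frac{1}{D_k^G}\,\mathrm{Tr}_{\mathcal V_k}(Y)\otimes\id_{\mathcal V_k}.
\]
Set $\tilde\gamma^{(k)}_l:=V_k(\id_{\mathcal M_k}\otimes W_l)V_k^\dagger$ — partial isometries supported on block $k$ ($\tilde\gamma^{(k)\dagger}_l\tilde\gamma^{(k)}_l=\Pi_k$); expanding $V_k$ in the duality basis gives the announced closed form for $\tilde\gamma^{(k)}_l$. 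Plugging $\tilde\gamma^{(k)}_l$ into $\tfrac{1}{(D_k^G)^2}\sum_l\tilde\gamma^{(k)}_l J_\Phi\tilde\gamma^{(k)\dagger}_l$, pulling $V_k$, $V_k^\dagger$ outside the sum, applying the $1$-design identity with $Y=V_k^\dagger J_\Phi V_k$, and comparing with the displayed form of $\mathcal P_k$ gives the required identity. Multiplying by $\beta_k/D_k$ and summing over $k$ recovers Theorem~3 and finishes the proof.

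\textbf{Main obstacle.}
The computation in Steps~1–2 is routine once the setup is in place; the genuine work is the structural identification: showing cleanly that the sector projections $\mathcal P_k$ produced by the $\mathcal K$-twirl in Theorem~3 are \emph{precisely} the compress‑and‑depolarize maps above (in particular that all inter‑block coherences of $J_\Phi$ drop out and that the normalization is $1/D_k^G$ and not some other power), and exhibiting the isometry $V_k$ — equivalently the duality basis operators entering $\tilde\gamma^{(k)}_l$ — in a form for which the embedding $W_l\mapsto V_k(\id\otimes W_l)V_k^\dagger$ is manifestly independent of the non‑canonical choices of orthonormal bases in $\mathcal M_k$ and $\mathcal V_k$. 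After that, the unitary‑$1$-design step is just the classical fact that the completely depolarizing channel on $\mathbb C^{D}$ is the uniform average over any nice error basis, and everything else is bookkeeping.
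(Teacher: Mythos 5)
Your proposal is correct and takes essentially the same route as the paper: reduce to the invariant-sector decomposition of the Cartan twirl and then show that each sector projection $\mathcal P_k$ coincides with the channel built from the embedded unitary operator basis, which is exactly the content the paper imports from \cite{Markiewicz25} via the operators $\hat\Pi^{m_1m_2}_k[\pi_{\mathcal K}]$ in Eq.~\eqref{Gt0}. Your isometry $V_k$ onto the isotypic block and the conditional-expectation/nice-error-basis computation are just a more explicit rendering of the paper's (largely cited) argument, so there is nothing substantively different to flag.
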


\begin{proposition}[Channel $t$-design induced by a weighted group $t$-design, Section \ref{sec:designs}]
Let $t=t_{\ii}+t_{\ou}$ and suppose $\{(G_i,w_i)\}$ is a weighted group $t$-design for a subgroup $\mathcal G\subseteq \mathrm{GL}(d,\mathbb{C})$.
Then the averaged Choi operator can be reconstructed from a finite sum:
\[
J_{\mathcal M_{\mathcal G}^\Phi}
=
\left[
\int_{\mathcal G} \mathrm dG\; G^{\otimes t}\,J_\Phi^\Gamma\,(G^{\otimes t})^\dagger
\right]^\Gamma
=
\left[
\sum_i w_i\; G_i^{\otimes t}\,J_\Phi^\Gamma\,(G_i^{\otimes t})^\dagger
\right]^\Gamma.
\]
\end{proposition}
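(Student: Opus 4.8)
The plan is to chain together three facts: (i) the partial-transpose reduction from the Corollary, which rewrites the collective channel twirl as $J_{M_{\mathcal G}^\Phi}=\big[\int_{\mathcal G} dG\; G^{\otimes t}\,J_\Phi^\Gamma\,(G^{\otimes t})^\dagger\big]^\Gamma$ with $t=t_{\rm in}+t_{\rm out}$ (this extends verbatim from $\mathrm{U}(d)$ to a general subgroup $\mathcal G\subseteq\mathrm{GL}(d,\mathbb C)$ carrying the normalized Cartan measure, since the only ingredient used is that $\pi^{\rm in/out}$ are collective); (ii) the definition of a weighted group $t$-design, namely that $\sum_i w_i\, G_i^{\otimes t}\,X\,(G_i^{\otimes t})^\dagger = \int_{\mathcal G} dG\; G^{\otimes t}\,X\,(G^{\otimes t})^\dagger$ for every operator $X$ on $(\mathbb C^d)^{\otimes t}$; and (iii) linearity and involutivity of the partial transpose $\Gamma$. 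Applying (ii) with $X=J_\Phi^\Gamma$ and then (iii) to move $\Gamma$ outside the finite sum gives the claimed finite-sum reconstruction.

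First I would restate the partial-transpose identity in the form needed here. The Corollary is phrased for $\mathrm{U}(d)$, but its proof rests only on the algebraic identity $\Gamma\big((A\otimes \bar B)\,Y\,(A\otimes\bar B)^\dagger\big)=(A\otimes B^{T\dagger})\,Y^\Gamma\,(A\otimes B^{T\dagger})^\dagger$ applied with $A=B=G^{\otimes t_{\rm out}}$ on the output leg and the contragredient $\bar G^{\otimes t_{\rm in}}$ on the input leg, together with $\bar G^{T\dagger}=\overline{(G^{-1})^T}{}^{\dagger}$\,—\,more simply, one uses $(\bar G)^\Gamma$-conjugation turns into $G$-conjugation after transposing the input factor, exactly as in the unitary case, and the reductive twirl over $K A K'$ supplies the invariant measure $dG$. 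I would cite Theorem (Reductive twirl) for the existence and normalization of $dG$, so the first displayed equality in the Proposition is just that theorem specialized to the collective case.

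Next I would invoke the weighted-design property. The subtlety worth spelling out is the \emph{order} of the design: because the twirl on the Choi operator is a \emph{state} twirl under $G^{\otimes t}$ acting by conjugation, the relevant moment is degree $(t,t)$ in $(G,\bar G)$, i.e.\ a $t$-design in the usual unitary-design sense handles it for the compact part, and for the reductive part one needs the weighted design adapted to the $KAK'$-measure. I would state this as the hypothesis ``$\{(G_i,w_i)\}$ reproduces the $t$-th conjugation moment,'' then substitute $X=J_\Phi^\Gamma$ to get $\int_{\mathcal G} dG\; G^{\otimes t}\,J_\Phi^\Gamma\,(G^{\otimes t})^\dagger=\sum_i w_i\, G_i^{\otimes t}\,J_\Phi^\Gamma\,(G_i^{\otimes t})^\dagger$. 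Finally, since $\Gamma$ is linear and an involution, $\big[\sum_i w_i\,(\cdots)\big]^\Gamma=\sum_i w_i\,\big[(\cdots)\big]^\Gamma$, which is the second displayed equality; no further computation is needed.

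The main obstacle is not any calculation but a \emph{definitional} one: making precise which notion of ``weighted group $t$-design'' is intended when $\mathcal G$ is non-compact, so that the design identity $\sum_i w_i\,G_i^{\otimes t}(\cdot)(G_i^{\otimes t})^\dagger=\int_{\mathcal G}dG\,G^{\otimes t}(\cdot)(G^{\otimes t})^\dagger$ holds with respect to the \emph{same} normalized $KAK'$ measure used in the reductive twirl theorem. I would handle this by adopting the convention already fixed in Section~\ref{sec:nonunitary} (the normalized Cartan measure) and noting that for $\mathcal G$ compact this reduces to the standard unitary $t$-design. Once that convention is in place the proof is a three-line substitution, so I would keep the write-up correspondingly short, emphasizing the reduction via the Corollary as the only non-trivial input.
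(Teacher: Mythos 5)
Your proposal follows essentially the same route as the paper: the partial-transpose identity of Lemma~\ref{lem:UBar} applied to the collective representations yields $J_{\mathcal M^{\Phi}_{\mathcal G}}=\bigl[\int_{\mathcal G}\operatorname{d}\!G\,G^{\otimes t}J_{\Phi}^{\Gamma}(G^{\otimes t})^{\dagger}\bigr]^{\Gamma}$, and then Lemma~\ref{lem:Gdesigns} (the design property applied to the arbitrary operator $J_{\Phi}^{\Gamma}$ plus linearity of $\Gamma$) replaces the integral by the weighted finite sum. Your added remark that the weighted design must reproduce the $t$-th conjugation moment with respect to the \emph{same} normalized Cartan measure used in the reductive twirl is a useful clarification the paper leaves implicit, but it does not change the argument.
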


\paragraph{Warm-up identity: induced Choi action.}
For completeness, we also record the basic Choi-level identity that relates pre/post twirling of channels to the induced action on the Choi operator (Section~\ref{sec:unitary}).
Since variants of this statement appear in several contexts in the literature, we treat it as a structural lemma underpinning the subsequent constructions.

\begin{lem}[Choi form of channel twirling under an induced representation]
\label{lem:choiform}
Let $\Phi:\mathcal{B}(\mathcal H_{\ii})\to\mathcal{B}(\mathcal H_{\ou})$ be a quantum channel with Choi operator $J_\Phi$.
Given unitary representations $\pi_{\mathrm{U}}^{\ii}$ on $\mathcal H_{\ii}$ and $\pi_{\mathrm{U}}^{\ou}$ on $\mathcal H_{\ou}$, define
\[
\mathcal M_{\mathrm U}^{\Phi}(\rho)=\int_{\ud} dU\;
\mathcal T^{\ou}_U\circ \Phi\circ \mathcal T^{\ii}_{U^{-1}}(\rho),
\qquad
\mathcal T^{\ii/\ou}_U(X)=\pi_{\mathrm{U}}^{\ii/\ou}(U)\,X\,\pi_{\mathrm{U}}^{\ii/\ou}(U)^\dagger.
\]
Then
\[
J_{\mathcal M_{\mathrm U}^{\Phi}}
=
\int_{\ud} dU\;
\Big(\pi_{\mathrm{U}}^{\ou}(U)\otimes \bar{\pi}_{\mathrm{U}}^{\ii}(U)\Big)\,J_\Phi\,
\Big(\pi_{\mathrm{U}}^{\ou}(U)\otimes \bar{\pi}_{\mathrm{U}}^{\ii}(U)\Big)^\dagger.
\]
\end{lem}

\section{Averaging quantum channels over arbitrary unitary transformations}
\label{sec:unitary}

In this section we provide an analytic formula for the action of the  channel \eqref{def:GU}
utilizing channel-state duality. Let
$J_{\Phi}=(\Phi\otimes\id_{\ii})\left(\rme\right)\in \mathcal B(\mathcal
H_{\ou})\otimes \mathcal B(\mathcal H_{\ii})$ denote the Choi matrix of the
channel $\Phi$, in which $\rme$ is a non-normalized density matrix of a
maximally entangled state on $\mathcal H_{\ii}\otimes \mathcal H_{\ii}$ and
$\id_{\ii}$ is the identity channel on $\mathcal B(\mathcal H_{\ii})$. Our aim
is to provide an analytic form of the Choi matrix of the channel
$\mathcal{G}^{\Phi}_U$. Utilizing definition of the Choi matrix we obtain:
\begin{eqnarray}
    \label{eq:TwirlChoi}
    J_{\mathcal M^{\Phi}_{\operatorname{U}}}&=&\left(\mathcal
    M^{\Phi}_{\operatorname{U}}\otimes\id_{\ii}\right)\rme\nonumber\\
    &=&\left(\int_{\ud}\operatorname{d}\!U(\mathcal T_U^{\ou}\otimes\id_{\ii}) \circ(\Phi\otimes\id_{\ii})\circ(\mathcal T_{U^{-1}}^{\ii}\otimes\id_{\ii})\right)\rme\nonumber\\
    &=&\int_{\ud}\operatorname{d}\!U \left(\pi_{\operatorname{U}}^{\ou}(U)\otimes\id_{\ii}\right)\left((\Phi\otimes\id_{\ii})\left[\left(\pi_{\operatorname{U}}^{\ii}(U)^{\dagger}\otimes\id_{\ii}\right)\rme\left(\pi_{\operatorname{U}}^{\ii}(U)\otimes\id_{\ii}\right)\right]\right)\left(\pi_{\operatorname{U}}^{\ou}(U)^{\dagger}\otimes\id_{\ii}\right).\nonumber\\
\end{eqnarray}
Now we can extract the inner unitary representation using the following trick,
which holds for arbitrary matrix $X\in\mathcal B(\mathcal H_{\ii})$:
\begin{equation}
    \label{AT}
    (X^{\dagger}\otimes\id_{\ii})\rme (X\otimes\id_{\ii})=(\id_{\ii}\otimes \bar{X})\rme (\id_{\ii}\otimes X^T),
\end{equation}
in which bar denotes complex conjugation.
Applying this property we change the order of the inner unitary representations
in the tensor product:
\begin{eqnarray}
    \label{eq:TwirlChoi2}
    J_{\mathcal M^{\Phi}_{\operatorname{U}}}
    &=&\int_{\ud}\operatorname{d}\!U \left(\pi_{\operatorname{U}}^{\ou}(U)\otimes\id_{\ii}\right)\left((\Phi\otimes\id_{\ii})\left[\left(\id_{\ii}\otimes \Bar{\pi}_{\operatorname{U}}^{\ii}(U)\right)\rme\left(\id_{\ii}\otimes \pi_{\operatorname{U}}^{\ii}(U)^T\right)\right]\right)\left(\pi_{\operatorname{U}}^{\ou}(U)^{\dagger}\otimes\id_{\ii}\right).\nonumber\\
\end{eqnarray}
At this point we need the following lemma:
\begin{lem}
Let $\Phi:\mathcal B(\mathcal H_{\ii})\mapsto \mathcal B(\mathcal H_{\ou})$  and
$\Psi:\mathcal B(\mathcal H_{\ii})\mapsto \mathcal B(\mathcal H_{\ii})$ be two
channels. Then the following commutation relation holds:
\begin{equation}
(\Phi\otimes\id_{\ii})\circ(\id_{\ii}\otimes\Psi)=(\id_{\ou}\otimes\Psi)\circ(\Phi\otimes\id_{\ii}).
\end{equation}
\end{lem}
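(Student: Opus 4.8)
The plan is to reduce the claimed identity to the \emph{interchange law} for tensor products of linear maps and to verify it by evaluating both sides on a spanning set of simple tensors. Since $\Phi$, $\Psi$, $\id_{\ii}$ and $\id_{\ou}$ are all linear, both composite maps in the statement are linear maps $\mathcal B(\mathcal H_{\ii})\otimes\mathcal B(\mathcal H_{\ii})\to\mathcal B(\mathcal H_{\ou})\otimes\mathcal B(\mathcal H_{\ii})$, so it suffices to check equality on product operators $\sigma\otimes\tau$ with $\sigma,\tau\in\mathcal B(\mathcal H_{\ii})$, because such operators span the domain.

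First I would compute the left-hand side: $(\id_{\ii}\otimes\Psi)(\sigma\otimes\tau)=\sigma\otimes\Psi(\tau)$, and then $(\Phi\otimes\id_{\ii})(\sigma\otimes\Psi(\tau))=\Phi(\sigma)\otimes\Psi(\tau)$. Next I would compute the right-hand side: $(\Phi\otimes\id_{\ii})(\sigma\otimes\tau)=\Phi(\sigma)\otimes\tau$, and then $(\id_{\ou}\otimes\Psi)(\Phi(\sigma)\otimes\tau)=\Phi(\sigma)\otimes\Psi(\tau)$. The two results coincide, so by linearity the maps agree on all of $\mathcal B(\mathcal H_{\ii})\otimes\mathcal B(\mathcal H_{\ii})$. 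Equivalently, both composites equal $\Phi\otimes\Psi$ by the functoriality relation $(f'\otimes g')\circ(f\otimes g)=(f'\circ f)\otimes(g'\circ g)$ applied with the obvious choices of the four maps.

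The only point that requires a little care is the bookkeeping of \emph{which} copy each identity channel acts on: in $(\id_{\ii}\otimes\Psi)$ the factor $\id_{\ii}$ acts on the first tensor slot, which is still an input copy before $\Phi$ has been applied, whereas in the composition $(\Phi\otimes\id_{\ii})\circ(\id_{\ii}\otimes\Psi)$ the channel $\Phi$ subsequently promotes that slot from $\mathcal B(\mathcal H_{\ii})$ to $\mathcal B(\mathcal H_{\ou})$, matching the $\id_{\ou}$ appearing on the right-hand side. Once this identification is made explicit there is no real obstacle — the statement is essentially the interchange property of $\otimes$ — and complete positivity or trace preservation of $\Phi$ and $\Psi$ play no role; linearity alone suffices.
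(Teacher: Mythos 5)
Your proof is correct; it is the standard interchange-law argument (both composites equal $\Phi\otimes\Psi$), verified on spanning product operators and extended by linearity. The paper states this lemma without proof, evidently regarding exactly this observation as immediate, so your argument supplies the omitted justification and nothing more is needed.
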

\noindent Utilizing the above lemma we obtain:
\begin{eqnarray}
    \label{eq:TwirlChoi3}
    J_{\mathcal M^{\Phi}_{\operatorname{U}}}
    &=&\int_{\ud}\operatorname{d}\!U \left(\pi_{\operatorname{U}}^{\ou}(U)\otimes \bar{\pi}_{\operatorname{U}}^{\ii}(U)\right)\left((\Phi\otimes\id_{\ii})\rme\right)\left(\pi_{\operatorname{U}}^{\ou}(U)\otimes \bar{\pi}_{\operatorname{U}}^{\ii}(U)\right)^{\dagger}.\nonumber\\
     &=&\int_{\ud}\operatorname{d}\!U \left(\pi_{\operatorname{U}}^{\ou}(U)\otimes \bar{\pi}_{\operatorname{U}}^{\ii}(U)\right)J_{\Phi}\left(\pi_{\operatorname{U}}^{\ou}(U)\otimes \bar{\pi}_{\operatorname{U}}^{\ii}(U)\right)^{\dagger}.\nonumber\\
\end{eqnarray}
The above integral can be always formally expressed in a closed form using
generalized duality for group representations. Indeed, note that:
\begin{equation}
\label{piU}
\pi_{\operatorname{U}}=\pi_{\operatorname{U}}^{\ou}\otimes \bar{\pi}_{\operatorname{U}}^{\ii},
\end{equation}
is a finite dimensional rational representation of the unitary group on a
representation space $(\mathbb C^d)^{\otimes(t_{\ii}+t_{\ou})}$. This is
because: (i) the contragradient representation \cite{Studzinski17}
$\bar{\pi}_{\operatorname{U}}^{\ii}(U)={\pi}_{\operatorname{U}}^{\ii}(U^{-1})^T$
is a rational representation (\cite{Goodman09}, Sec. 1.5.2), and (ii) tensor
product of two rational representations is a rational representation
(\cite{Goodman09}, \textit{ibid}). Therefore one can apply the generalized
duality theorem for group representations (\cite{Goodman09}, Sec. 4.2.1)\footnote{Note that the assumption of \textit{local regularity} in the general duality theorem is in the finite dimensional case equivalent to rationality of the representation, see Sec. 1.5.1 of \cite{Goodman09}.}, which
boils down to a conclusion, that the action of $\pi_{\operatorname{U}}\times
\pi_{\mathcal C}$, where $\pi_{\mathcal C}$ is a commutant of
$\pi_{\operatorname{U}}$ in the algebra of endomorphisms of $(\mathbb
C^d)^{\otimes(t_{\ii}+t_{\ou})}$, decomposes uniquely into multiplicity-free
direct sum of tensor products of irreducible representations
$\xi_{\operatorname{U}}$ of the unitary group $\operatorname{U}(d)$ and
irreducible representations $\xi_{\mathcal C}$ of a group which is represented
by the commutant $\pi_{\mathcal C}$:
\begin{equation}
\label{eq:isodecomp}
    \pi_{\operatorname{U}}\times\pi_{\mathcal C}=\bigoplus_{k\in\mathcal I} \xi_{\operatorname{U}}^{(k)}\otimes \xi_{\mathcal C}^{(k)}.
\end{equation}
Now, following \cite{Markiewicz25}, one can introduce generalized Schur basis
for the representation space $(\mathbb C^d)^{\otimes(t_{\ii}+t_{\ou})}$, which
has the property that it block-diagonalizes the action of \eqref{eq:isodecomp}.
Schur basis is spanned by vectors of the form $\{\ket{k,m,\lambda}\}$, in which
index $k$ numbers different irreducible subspaces, whereas indices $m$ and
$\lambda$ define internal structure of these subspaces related with irreducible
representations of $\xi_{\operatorname{U}}^{(k)}$ and $\xi_{\mathcal{C}}^{(k)}$. The
transition from standard product basis and the Schur basis
$\{\ket{k,m,\lambda}\}$ is realized by a unitary operation called Quantum Schur
Transform, which is explicitly known only for several concrete representations,
see e.g. \cite{HarrowPHD, SWC_Bacon06,SWC_Kirby18,SWC_Krovi19}. Having defined Schur basis
we can  introduce  the following outer-product-based operators, as shown in
recent works \cite{Markiewicz23, Schlichtholz24}:
    \begin{equation}
    \label{FullPiBasis}
    \hat\Pi_{kk'}^{m_1\lambda_1 m_2\lambda_2}=\ket{k,m_1,\lambda_1}\!\bra{k',m_2,\lambda_2}.
\end{equation}
With the help of the above operators we can define two \textit{Schur operator
bases}:
\begin{eqnarray}
\label{PiBasis0}
\hat\Lambda^{\lambda_1\lambda_2}_k&=&\sum_{m=1}^{D^k_U}\hat\Pi_{kk}^{m\lambda_1 m\lambda_2},\nonumber\\
\hat\Pi^{m_1 m_2}_k&=&\sum_{\lambda=1}^{D^k_C}\hat\Pi_{kk}^{m_1\lambda m_2\lambda},
\end{eqnarray}
in which $D^k_U$ and $D^k_C$ denote ranges of indices respectively $m$ and
$\lambda$. The operators $\hat\Lambda^{\lambda_1\lambda_2}_k$ span operator
representation of  irreducible representations of $\pi_{\mathcal C}$. 
From now on  we will be using the notation $\hat\Lambda^{\lambda_1\lambda_2}_k[\pi_{\operatorname{U}}^{\ou}\otimes
\bar{\pi}_{\operatorname{U}}^{\ii}]$ in order to explicitly indicate the representation from which the operators are derived.
Now, note
that the pair $\{\pi_{\operatorname{U}}, \pi_{\mathcal{C}}\}$ is a dual
reductive pair \cite{Goodman09}, and as shown in \cite{Markiewicz23} in Theorem
2 one can represent the integral \eqref{eq:TwirlChoi3} as a projection of the
input state (here $J_{\Phi}$) to the basis $\hat\Lambda^{\lambda_1\lambda_2}_k[\pi_{\operatorname{U}}^{\ou}\otimes
\bar{\pi}_{\operatorname{U}}^{\ii}]$:
\begin{eqnarray}
    \label{eq:UTwirl}
     J_{\mathcal M^{\Phi}_{\operatorname{U}}}&=&\int_{\ud}\operatorname{d}\!U \left(\pi_{\operatorname{U}}^{\ou}(U)\otimes \bar{\pi}_{\operatorname{U}}^{\ii}(U)\right)J_{\Phi}\left(\pi_{\operatorname{U}}^{\ou}(U)\otimes \bar{\pi}_{\operatorname{U}}^{\ii}(U)\right)^{\dagger}\nonumber\\
    &=&\int_{\ud}\operatorname{d}\!U \pi_{\operatorname{U}}(U)J_{\Phi}\left(\pi_{\operatorname{U}}(U)\right)^{\dagger}\nonumber\\
    &=&\sum_{k\in\mathcal{I}}\frac{1}{D^k_U}\sum_{\lambda_1\lambda_2=1}^{D^k_C}\operatorname{Tr}\left(J_{\Phi}\hat\Lambda_{k}^{\lambda_1\lambda_2\dagger}[\pi_{\operatorname{U}}^{\ou}\otimes
\bar{\pi}_{\operatorname{U}}^{\ii}]\right)\hat\Lambda_{k}^{\lambda_1\lambda_2}[\pi_{\operatorname{U}}^{\ou}\otimes
\bar{\pi}_{\operatorname{U}}^{\ii}].
\end{eqnarray}

In order to utilize this solution one has to explicitly construct operators
$\hat\Lambda^{\lambda_1\lambda_2}_k[\pi_{\operatorname{U}}^{\ou}\otimes
\bar{\pi}_{\operatorname{U}}^{\ii}]$ which span irreducible representations of
the commutant of $\pi_{\operatorname{U}}^{\ou}\otimes
\bar{\pi}_{\operatorname{U}}^{\ii}$. Note that the typical choice of the
representations on input and output ports of the channel is the collective
action of the unitaries on elementary subsystems:
\begin{eqnarray}
\label{eq:collReps}
\pi_{\operatorname{U}}^{\ii}(U)&=&U^{\otimes t_{\ii}},\nonumber\\
\pi_{\operatorname{U}}^{\ou}(U)&=&U^{\otimes t_{\ou}},
\end{eqnarray}
and hence one gets:
\begin{equation}
\label{piUColl}
\pi_{\operatorname{U}}=\pi_{\operatorname{U}}^{\ou}\otimes \bar{\pi}_{\operatorname{U}}^{\ii}=U^{\otimes t_{\ou}}\otimes \bar{U}^{\otimes t_{\ii}}.
\end{equation}
The commutant of this representation is known to be the so-called walled Brauer
algebra \cite{Studzinski17, Mozrzymas18, Studzinski25}, the representations of
which are difficult to construct \cite{Studzinski25}. Although there exist
implementations of Quantum Schur Transform for this case (so called Mixed
Quantum Schur Transform \cite{Nguyen23,Grinko23, Grinko23gt}), one can wish to avoid these
representations and utilize simpler operator representations of permutations. In
order to do so, let us reexpress the averaging operations in a way which removes
the complex conjugation. We start from the following lemma:
\begin{lem}
\label{lem:UBar}
Let $X,Y$ be arbitrary matrices acting on $\mathbb{C}^r$ and let $\sigma$ be any
state on $\mathbb{C}^r\otimes \mathbb{C}^r$. Then the following holds:
\begin{equation}
    \label{eq:lemBar}
    (X\otimes\Bar{Y})\sigma (X\otimes\Bar{Y})^{\dagger}=\left[ (X\otimes{Y})\sigma^{\Gamma}(X\otimes{Y})^{\dagger}\right]^{\Gamma},
\end{equation}
where $^{\Gamma}$ denotes partial transpose on the second factor.
\end{lem}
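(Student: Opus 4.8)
The plan is to reduce \eqref{eq:lemBar} to the elementary transformation rule of the partial transpose under two-sided multiplication by product operators. First I would record the identity
\[
\bigl[(P\otimes R)\,M\,(Q\otimes S)\bigr]^{\Gamma}=(P\otimes S^{T})\,M^{\Gamma}\,(Q\otimes R^{T}),
\]
valid for arbitrary matrices on $\mathbb C^{r}\otimes\mathbb C^{r}$, with $P,Q$ acting on the first factor and $R,S$ on the second. This follows immediately by expanding $M=\sum_{ijkl}M_{ij,kl}\,\ket{i}\!\bra{k}\otimes\ket{j}\!\bra{l}$ and comparing matrix elements: conjugating the second leg sends $(R\ket{j}\!\bra{l}S)$ to $(R\ket{j}\!\bra{l}S)^{T}=S^{T}\ket{l}\!\bra{j}R^{T}$, i.e.\ it transposes each second-factor multiplier and interchanges the left and right ones, which is precisely the stated identity once one also writes $M^{\Gamma}=\sum_{ijkl}M_{ij,kl}\,\ket{i}\!\bra{k}\otimes\ket{l}\!\bra{j}$.

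Next I would apply this with $M=\sigma$, $P=X$, $Q=X^{\dagger}$, $R=\bar Y$, $S=(\bar Y)^{\dagger}=Y^{T}$, so that the left-hand side is exactly $\bigl[(X\otimes\bar Y)\,\sigma\,(X\otimes\bar Y)^{\dagger}\bigr]^{\Gamma}$. Using the bookkeeping identities $(Y^{T})^{T}=Y$, $(\bar Y)^{T}=Y^{\dagger}$, and $(\bar Y)^{\dagger}=Y^{T}$, the right-hand side collapses to $(X\otimes Y)\,\sigma^{\Gamma}\,(X^{\dagger}\otimes Y^{\dagger})=(X\otimes Y)\,\sigma^{\Gamma}\,(X\otimes Y)^{\dagger}$. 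Finally, since $\Gamma$ is an involution, applying it to both sides of this equality gives \eqref{eq:lemBar}. I would remark that positivity and normalization of $\sigma$ are never used, so the statement in fact holds for any matrix $\sigma$; the restriction to states in the lemma is only for the intended application to the Choi operator.

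The only delicate point is the bookkeeping of conjugations and transpositions together with the swap of the two second-factor multipliers induced by $\Gamma$; once that is pinned down everything is immediate. As an alternative route that makes the cancellations even more transparent, one may write $\sigma=\sum_{a}A_{a}\otimes B_{a}$, use linearity to reduce to a single product term $A\otimes B$, compute $(X\otimes\bar Y)(A\otimes B)(X\otimes\bar Y)^{\dagger}=XAX^{\dagger}\otimes\bar Y B\bar Y^{\dagger}$, and compare factor by factor with $\bigl[(X\otimes Y)(A\otimes B^{T})(X\otimes Y)^{\dagger}\bigr]^{\Gamma}=XAX^{\dagger}\otimes\bigl(YB^{T}Y^{\dagger}\bigr)^{T}$, using $\bar Y^{\dagger}=Y^{T}$ and $(Y^{\dagger})^{T}=\bar Y$. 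Either way, the result is the general, state-independent version of the ``ricochet'' identity \eqref{AT}, now written for an arbitrary bipartite operator rather than the maximally entangled state.
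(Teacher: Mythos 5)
Your proof is correct. Your primary route packages the computation as a general transformation rule $\bigl[(P\otimes R)M(Q\otimes S)\bigr]^{\Gamma}=(P\otimes S^{T})M^{\Gamma}(Q\otimes R^{T})$ followed by the involutivity of $\Gamma$, whereas the paper simply reduces by linearity to a product operator $\sigma=\xi\otimes\tau$ and verifies the identity by direct computation of both sides; your ``alternative route'' at the end is exactly the paper's argument. The two are the same elementary bookkeeping of transposes dressed differently: your version has the small advantage of isolating a reusable identity (and of making explicit that positivity and normalization of $\sigma$ are irrelevant, which the paper's proof also tacitly shows since it only uses linearity), while the paper's version avoids introducing the four-operator identity and gets to the answer in one chain of equalities. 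Either is a complete and valid proof of the lemma.
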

\begin{proof}
By the linearity argument it suffices to prove the above property for a product state $\sigma=\xi\otimes\tau$.
Then we have:
\begin{eqnarray}
    \left[ (X\otimes{Y})(\xi\otimes\tau)^{\Gamma}(X\otimes{Y})^{\dagger}\right]^{\Gamma}&=&\left[ (X\otimes{Y})(\xi\otimes\tau^T)(X^{\dagger}\otimes{Y}^{\dagger})\right]^{\Gamma}\nonumber\\
    &=&\left[ (X\xi{X^{\dagger}})\otimes(Y\tau^T{Y}^{\dagger})\right]^{\Gamma}\nonumber\\
    &=& (X\xi{X^{\dagger}})\otimes(Y\tau^T{Y}^{\dagger})^{T}\nonumber\\
    &=& (X\xi{X^{\dagger}})\otimes(\bar{Y}\tau{Y}^{T})\nonumber\\
    &=&(X\otimes\bar{Y})(\xi\otimes\tau)(X^{\dagger}\otimes{Y}^{T})\nonumber\\
    &=&(X\otimes\bar{Y})(\xi\otimes\tau)(X\otimes\bar{Y})^{\dagger}.
\end{eqnarray}
\end{proof}
\noindent Utilizing the above lemma for $X=\pi_{\operatorname{U}}^{\ou}(U)$, $Y=\pi_{\operatorname{U}}^{\ii}(U)$ and $\sigma=J_{\Phi}$ we obtain:
\begin{equation}
    \label{eq:TwirlChoiGen}
    J_{\mathcal M^{\Phi}_{\operatorname{U}}}=\int_{\ud}\operatorname{d}\!U \left[\left(\pi_{\operatorname{U}}^{\ou}(U)\otimes \pi_{\operatorname{U}}^{\ii}(U)\right)J^{\Gamma}_{\Phi}\left(\pi_{\operatorname{U}}^{\ou}(U)\otimes \pi_{\operatorname{U}}^{\ii}(U)\right)^{\dagger}\right]^{\Gamma}.
\end{equation}
For the special case of collective unitaries \eqref{eq:collReps} we have:
\begin{eqnarray}
    \label{eq:TwirlChoi4}
    J_{\mathcal M^{\Phi}_{\operatorname{U}}}
     &=&\int_{\ud}\operatorname{d}\!U \left[\left(U^{\otimes t_{\ou}}\otimes {U}^{\otimes t_{\ii}}\right)J_{\Phi}^{\Gamma}\left(U^{\otimes t_{\ou}}\otimes {U}^{\otimes t_{\ii}}\right)^{\dagger}\right]^{\Gamma}\nonumber\\
     &=&\left[\int_{\ud}\operatorname{d}\!U \left(U^{\otimes (t_{\ou}+t_{\ii})}\right)J_{\Phi}^{\Gamma}\left(U^{\otimes (t_{\ou}+t_{\ii})}\right)^{\dagger}\right]^{\Gamma}.\nonumber\\
\end{eqnarray}
Applying the general formula \eqref{eq:UTwirl} we have:
\begin{eqnarray}
    \label{eq:TwirlChoi5}
    J_{\mathcal M^{\Phi}_{\operatorname{U}}}
    &=&\left[\int_{\ud}\operatorname{d}\!U \left(U^{\otimes (t_{\ou}+t_{\ii})}\right)J_{\Phi}^{\Gamma}\left(U^{\otimes (t_{\ou}+t_{\ii})}\right)^{\dagger}\right]^{\Gamma}.\nonumber\\
     &=&\sum_{k}\frac{1}{D^k_U}\sum_{\lambda_1\lambda_2=1}^{D^k_C}\operatorname{Tr}\left(J_{\Phi}^{\Gamma}\hat\Lambda_{k}^{\lambda_1\lambda_2\dagger}[U^{\otimes (t_{\ou}+t_{\ii})}]\right)\left(\hat\Lambda_{k}^{\lambda_1\lambda_2}[U^{\otimes (t_{\ou}+t_{\ii})}]\right)^{\Gamma},\nonumber\\
\end{eqnarray}
where now the operators $\hat\Lambda^{\lambda_1\lambda_2}_k[U^{\otimes (t_{\ou}+t_{\ii})}]$ are operator
representations of permutation operators, and can be constructed directly using
the methodology of Young diagrams, see e.g. Appendix of \cite{Markiewicz23} and
\cite{Tung} for a theoretical exposition, and
\cite{HarrowPHD,SWC_Bacon06,SWC_Kirby18,SWC_Krovi19} for algorithmic
implementations.

\section{Averaging quantum channels over non-unitary operations defined via Cartan decomposition}
\label{sec:nonunitary}
Let us now try to generalize averaging of quantum channels from the unitary
averaging to averaging over finite-dimensional non-unitary representations of a non-compact
symmetry group $\mathcal G$. Such an averaging can be defined via iterated
integral over the Cartan decomposition of $\mathcal G$, which boils down to a
representation of an arbitrary element of $\mathcal G$ as a composition of
operations from the maximally compact component $\mathcal K$ of the group
intertwined by an element from a maximal Abelian subgroup $\mathcal A$:
$G=KAK',\,K,K'\in\mathcal{K},\,A\in\mathcal{A}$. Such idea of non-compact
averaging has been introduced in \cite{Markiewicz21,Markiewicz23} in the context
of averaging quantum states over SLOCC-type operations, and generalized to
averaging over arbitrary finite-dimensional matrix representations of reductive
groups in the recent work \cite{Markiewicz25}. Following \cite{Markiewicz25},
let us take arbitrary matrix representation $\pi_{\mathcal G}$ of the group
$\mathcal G$ and define the averaging of a quantum state with respect to the
chosen representation as follows:
\begin{equation}
    \mathcal M_{\mathcal{G}}(\rho)=\int_{\mathcal K\times\mathcal A\times\mathcal K} \left(\pi_{\mathcal G}(K)\pi_{\mathcal G}(A_{\textrm{n}})\pi_{\mathcal G}(K')\right)\rho\left(\pi_{\mathcal G}(K)\pi_{\mathcal G}(A_{\textrm{n}})\pi_{\mathcal G}(K')\right)^{\dagger} \operatorname{d}\!K\operatorname{d}\!A \operatorname{d}\!K'.
    \label{mainTwirlG0}
\end{equation}
in which the integration is defined over the Cartan ``KAK'' decomposition of
$\mathcal G$. The integration over the two copies of maximally compact
components $\mathcal K, \mathcal K'$ is performed uniformly, which means that
$\int_{\mathcal K}\operatorname{d}\!K$ represents Haar integral over $\mathcal
K$. Integration over $\mathcal A$ is performed with respect to any finite
normalized measure on this (in general non-compact) manifold, but there appears
one crucial assumption: elements of $\mathcal A$ enter the integral as
normalized matrices: $A_{\textrm{n}}=A/||A||$, in order to guarantee trace
nonincreasing character of the map \eqref{mainTwirlG0}.

Our aim is to define twirling of quantum channels with respect to the representation $\pi_{\mathcal G}$  applied as pre- and post-processing operation:
 \begin{eqnarray}
    \label{def:G}
    \mathcal T_G^{\ii}(\sigma_{\ii})&=&\pi^{\ii}_{{\mathcal G}}(G)\sigma_{\ii} \pi^{\ii}_{{\mathcal G}}(G)^{\dagger},\,\,\sigma_{\ii}\in\mathcal B(\mathcal H_{\ii})\nonumber\\
    \mathcal T_G^{\ou}(\sigma_{\ou})&=&\pi^{\ou}_{{\mathcal G}}(G)\sigma_{\ou} \pi^{\ou}_{{\mathcal G}}(G)^{\dagger},\,\,\sigma_{\ou}\in\mathcal B(\mathcal H_{\ou}).
\end{eqnarray}
In order to follow the analogy with \eqref{def:GU} we have  to define the
counterpart of the $\mathcal T_{U^{-1}}$ operation from definition
\eqref{def:GU}. Formally every element $G$ from the group $\mathcal G$ is
invertible, however, the demand of trace-nonincreasing character of a quantum
channel forces us to integrate over normalized elements from the group $\mathcal
G$ in the twirling channel \eqref{mainTwirlG0}. Now, in general normalized
matrices, which represent trace-nonincreasing quantum channels, are not
invertible as quantum operations. Therefore we define twirling of quantum
channels with respect to $\mathcal G$ operations as follows:
\begin{equation}
    \label{def:GSL}
    \mathcal M^{\Phi}_{\mathcal G}(\rho)=\int\operatorname{d}\!G\,\mathcal T_{G}^{\ou} \circ\Phi\circ\mathcal T_{G}^{\ii\dagger}(\rho),
\end{equation}
in which we define:
\begin{eqnarray}
    \mathcal T_G^{\ii\dagger}(\sigma_{\ii})&=&\pi^{\ii}_{{\mathcal G}}(G)^{\dagger}\sigma_{\ii} \pi^{\ii}_{{\mathcal G}}(G).
    \label{TSLDag}
\end{eqnarray}
Note that the integration in \eqref{def:GSL} is understood in the sense of
Cartan integral \eqref{mainTwirlG0}. In the case of a compact group $\mathcal G$
with trivial $\mathcal A$ this definition is fully equivalent to \eqref{def:GU}.

In analogy with \eqref{eq:TwirlChoi} the Choi matrix of a channel twirled with respect to a group $\mathcal G$ reads:
\begin{eqnarray}
    \label{eq:TwirlChoiG}
    J_{\mathcal M^{\Phi}_{\mathcal G}}&=&\left(\mathcal
    M^{\Phi}_{\mathcal G}\otimes\id_{\ii}\right)\rme\nonumber\\
    &=&\left(\int\operatorname{d}\!G(\mathcal T_G^{\ou}\otimes\id_{\ii}) \circ(\Phi\otimes\id_{\ii})\circ(\mathcal T_{G}^{\ii \dagger}\otimes\id_{\ii})\right)\rme\nonumber\\
    &=&\int\operatorname{d}\!G \left(\pi_{\mathcal G}^{\ou}(G)\otimes\id_{\ii}\right)\left((\Phi\otimes\id_{\ii})\left[\left(\pi_{\mathcal G}^{\ii}(G)^{\dagger}\otimes\id_{\ii}\right)\rme\left(\pi_{\mathcal G}^{\ii}(G)\otimes\id_{\ii}\right)\right]\right)\left(\pi_{\mathcal G}^{\ou}(G)^{\dagger}\otimes\id_{\ii}\right).\nonumber\\
\end{eqnarray}
Now all of the derivations \eqref{AT}--\eqref{eq:TwirlChoi3} concerning
the unitary case hold, since they do not utilize the property of unitarity of the underlying representations, and we have the following expression for the Choi matrix
of the channel \eqref{def:GSL}:
\begin{eqnarray}
    \label{eq:TwirlChoiG0}
    J_{\mathcal M^{\Phi}_{\mathcal G}}
     &=&\int\operatorname{d}\!G \left(\pi_{\mathcal G}^{\ou}(G)\otimes \bar{\pi}_{\mathcal G}^{\ii}(G)\right)J_{\Phi}\left(\pi_{\mathcal G}^{\ou}(G)\otimes \bar{\pi}_{\mathcal G}^{\ii}(G)\right)^{\dagger}.
\end{eqnarray}
Due to distributivity of the tensor product with respect to ordinary matrix
product, we have the following Cartan decompositon:
\begin{equation}
\pi_{\mathcal G}^{\ou}(G)\otimes \bar{\pi}_{\mathcal G}^{\ii}(G)=\left(\pi_{\mathcal G}^{\ou}(K)\otimes \bar{\pi}_{\mathcal G}^{\ii}(K)\right)\left(\pi_{\mathcal G}^{\ou}(A_{\textrm{n}})\otimes \pi_{\mathcal G}^{\ii}(A_{\textrm{n}})\right)\left(\pi_{\mathcal G}^{\ou}(K')\otimes \bar{\pi}_{\mathcal G}^{\ii}(K')\right),
\end{equation}
in which we omit bar in the middle term since $A_{\textrm{n}}$ is a real
diagonal matrix. Since Cartan decomposition commutes with taking representations
of the components \cite{Markiewicz25}, it can be seen that $\pi_{\mathcal
G}^{\ou}\otimes \bar{\pi}_{\mathcal G}^{\ii}|_{\mathcal K}\equiv\pi_{\mathcal
K}$, when acting solely on compact elements $K\in\mathcal{K}$ of the group
$\mathcal{G}$, is itself a representation of the maximal compact subgroup
$\mathcal K$ of $\mathcal G$. Analogously, in the same sense $\pi_{\mathcal
G}^{\ou}\otimes \bar{\pi}_{\mathcal G}^{\ii}|_{\mathcal A}\equiv\pi_{\mathcal
A}$, when acting solely on non-compact elements $A\in\mathcal{A}$ can be treated
as a representation of the maximal Abelian subgroup $\mathcal A$ of $\mathcal
G$. Hence the integral \eqref{eq:TwirlChoiG0} becomes:
\begin{eqnarray}
    \label{eq:TwirlChoiG1}
    J_{\mathcal M^{\Phi}_{\mathcal G}}
     &=&\int_{\mathcal K\times\mathcal A\times\mathcal K} \left(\pi_{\mathcal K}(K)\pi_{\mathcal A}(A_{\textrm{n}})\pi_{\mathcal K}(K')\right)J_{\Phi}\left(\pi_{\mathcal K}(K)\pi_{\mathcal A}(A_{\textrm{n}})\pi_{\mathcal K}(K')\right)^{\dagger} \operatorname{d}\!K\operatorname{d}\!A \operatorname{d}\!K'.
\end{eqnarray}
As shown in \cite{Markiewicz23} such integral decomposes into a product of
compact twirling of the input operator and of the square of the non-compact
part:
\begin{equation}
\mathcal M_{\mathcal G}(\rho)=\mathcal M_{\mathcal K}(\rho)\mathcal M_{\mathcal K}\left(\pi_{\mathcal A}(A_{\textrm{n}})^2\right),
\end{equation}
which can be directly represented as:
\begin{equation}
\label{eq:SLTwirlGen}
    \mathcal M_{\mathcal G}(\rho)=\sum_k\left(\frac{\beta_k}{D^k}\right)\mathcal M_{\mathcal K}^{(k)}(\rho)=\sum_{k}\frac{1}{D^k_U}\left(\frac{\beta_k}{D^k}\right)\sum_{\lambda_1\lambda_2=1}^{D^k_C}\operatorname{Tr}\left(\rho\hat\Lambda_{k}^{\lambda_1\lambda_2\dagger}[\pi_{\mathcal
K}]\right)\hat\Lambda_{k}^{\lambda_1\lambda_2}[\pi_{\mathcal
K}],
\end{equation}
where the coefficients $\beta_k$ are integrals over the non-compact components:
\begin{equation}
\label{betafin}
\beta_k=\Tr\left(\left[\int\left(\pi_{\mathcal A}(A_{\textrm{n}})\right)^2\operatorname{d}\!A\right]\hat\Pi_k^\dagger\right),
\end{equation}
and the operators $\hat\Lambda_{k}^{\lambda_1\lambda_2}[\pi_{\mathcal
K}]$ are Schur operator
basis found by applying generalized duality theorem to the representation
$\pi_{\mathcal K}=\pi_{\mathcal G}^{\ou}\otimes\bar{\pi}_{\mathcal
G}^{\ii}|_{\mathcal K}$. Therefore, we finally have:
\begin{equation}
    \label{eq:finJMPhiG}
J_{\mathcal M^{\Phi}_{\mathcal G}}=\sum_{k}\frac{1}{D^k_U}\left(\frac{\beta_k}{D^k}\right)\sum_{\lambda_1\lambda_2=1}^{D^k_C}\operatorname{Tr}\left(J_{\Phi}\hat\Lambda_{k}^{\lambda_1\lambda_2\dagger}[\pi_{\mathcal
K}]\right)\hat\Lambda_{k}^{\lambda_1\lambda_2}[\pi_{\mathcal
K}].
\end{equation}
In full analogy with the previous case one may want to avoid using
contragradient representations in the generalized Schur-Weyl duality\footnote{Note that although formally $\bar\pi_{\mathcal G}^{\ii}$ is \textit{not} a contragradient representation to $\pi_{\mathcal G}^{\ii}$, for a non-unitary representation $\pi_{\mathcal G}^{\ii}$, the restriction  $\bar\pi_{\mathcal K}^{\ii}$ is contragradient to $\pi_{\mathcal K}^{\ii}$, since we can without loss of generality always take unitary $\pi_{\mathcal K}^{\ii}$.}. Utilizing
Lemma \ref{lem:UBar} the formula \eqref{eq:TwirlChoiG0} becomes:
\begin{eqnarray}
    \label{eq:TwirlChoiG0PT}
    J_{\mathcal M^{\Phi}_{\mathcal G}}
     &=&\int\operatorname{d}\!G \left[\left(\pi_{\mathcal G}^{\ou}(G)\otimes \pi_{\mathcal G}^{\ii}(G)\right)J_{\Phi}^{\Gamma}\left(\pi_{\mathcal G}^{\ou}(G)\otimes \pi_{\mathcal G}^{\ii}(G)\right)^{\dagger}\right]^{\Gamma},
\end{eqnarray}
which can be analogously, using Cartan decomposition, decomposed into:
\begin{equation}
    \label{eq:finJMPhiGPT}
J_{\mathcal M^{\Phi}_{\mathcal G}}=\sum_{k}\frac{1}{D^k_U}\left(\frac{\beta_k}{D^k}\right)\sum_{\lambda_1\lambda_2=1}^{D^k_C}\operatorname{Tr}\left(J_{\Phi}^{\Gamma}\hat\Lambda_{k}^{\lambda_1\lambda_2\dagger}[\pi_{\mathcal K}^{\ou}\otimes\pi_{\mathcal{K}}^{\ii}]\right)\left(\hat\Lambda_{k}^{\lambda_1\lambda_2}[\pi_{\mathcal K}^{\ou}\otimes\pi_{\mathcal{K}}^{\ii}]\right)^{\Gamma},
\end{equation}
where now the operators $\hat\Lambda_{k}^{\lambda_1\lambda_2}[\pi_{\mathcal K}^{\ou}\otimes\pi_{\mathcal{K}}^{\ii}]$ are elements of
the Schur operator basis related with decomposing the representations of
$\pi_{\mathcal G}^{\ou}\otimes\pi_{\mathcal{G}}^{\ii}|_{\mathcal{K}}=\pi_{\mathcal K}^{\ou}\otimes\pi_{\mathcal{K}}^{\ii}$, which
are typically much easier to construct. Indeed, let us take as an example the
case of $\mathcal G=\textrm{SL}(d,\mathbb C)$, for which the maximally compact
Cartan component $\mathcal K$ is the special unitary group $\textrm{SU}(d)$. In
full analogy to the unitary case we assume collective action of group elements
on input and output spaces:
\begin{eqnarray}
\pi_{\textrm{SL}}^{\ii}(L)&=&L^{\otimes t_{\ii}},\,\,L\in\textrm{SL}(d,\mathbb C),\nonumber\\
\pi_{\textrm{SL}}^{\ou}(L)&=&L^{\otimes t_{\ou}},\,\,L\in\textrm{SL}(d,\mathbb C).
\end{eqnarray}
In this case the Choi matrix of a twirled channel reads after
\eqref{eq:TwirlChoiG0PT}:
\begin{eqnarray}
    \label{eq:TwirlChoiG0SLT}
    J_{\mathcal M^{\Phi}_{\textrm{SL}}}
     &=&\int_{\textrm{SL}(d,\mathbb C)}\operatorname{d}\!L \left[\left(L^{\otimes(t_{\ii}+t_{\ou})}\right)J_{\Phi}^{\Gamma}\left(L^{\otimes(t_{\ii}+t_{\ou})}\right)^{\dagger}\right]^{\Gamma}.
\end{eqnarray}
Now the restricted representation $\pi_{\mathcal
G}^{\ou}\otimes \pi_{\mathcal G}^{\ii}|_{\mathcal K}$ reads in this case simply 
$U^{\otimes(t_{\ii}+t_{\ou})}$, for $U\in\textrm{SU}(d)$, and we
have, following \eqref{eq:finJMPhiGPT}: 
\begin{equation}
    \label{eq:finJMPhiSLPT}
J_{\mathcal M^{\Phi}_{\textrm{SL}}}=\sum_{k}\frac{1}{D^k_U}\left(\frac{\beta_k}{D^k}\right)\sum_{\lambda_1\lambda_2=1}^{D^k_C}\operatorname{Tr}\left(J_{\Phi}^{\Gamma}\hat\Lambda_{k}^{\lambda_1\lambda_2\dagger}[U^{\otimes(t_{\ii}+t_{\ou})}]\right)\left(\hat\Lambda_{k}^{\lambda_1\lambda_2}[U^{\otimes(t_{\ii}+t_{\ou})}]\right)^{\Gamma},
\end{equation}
where now $\hat\Lambda_{k}^{\lambda_1\lambda_2}[U^{\otimes(t_{\ii}+t_{\ou})}]$ are ordinary permutation
operators corresponding to standard Schur-Weyl duality for
$\textrm{U}^{\otimes(t_{\ii}+t_{\ou})}$ and the permutation group $\textrm{S}_{(t_{\ii}+t_{\ou})}$.

\section{Dual averaging of channels}
\label{sec:dual}
In this section we utilize the concept of \textit{generalized finite unitary
averaging} introduced in a recent work \cite{Markiewicz25} in the context of
averaging of quantum channels. The idea of generalized finite unitary averaging,
originally introduced for averaging quantum states over general symmetry
operations, also bases on generalized duality for group representations,
however, in analogy to the concept of unitary $t$-designs, it keeps the
averaging procedure on the original group representation side, and not on its
commutant, as in formulas \eqref{eq:UTwirl} and \eqref{eq:finJMPhiG}, in which
the Choi matrix of the twirled channel is expressed in terms of operators
$\hat\Lambda_{k}^{\lambda_1\lambda_2}$, which span the commutant of tensor
product of the original symmetry group. Indeed, the idea in \cite{Markiewicz25}
is to represent the Cartan-decomposition-based average of a quantum state with respect to arbitrary symmetry
group \eqref{mainTwirlG0} as a probabilistic mixture of unitary-$1$-designs on
invariant subspaces:
\begin{equation}
    \label{def-fin-group-average}
    \mathcal M_{\mathcal{G}}(\rho)=\sum_{k\in\mathcal{I}}p_k\mathcal{D}_{\mathcal K}^{(k)}(\rho),
\end{equation}
in which $\mathcal{D}_{\mathcal K}^{(k)}$ represents unitary-$1$-design on
$D^k_G$-dimensional subspace irreducible under the action of the restricted representation $\pi_{\mathcal K}$, for $\mathcal K$
being the maximal compact subgroup of $\mathcal G$:
\begin{equation}
    \label{def-fin-group-average-mixed-unitary}
   \mathcal{D}_{\mathcal K}^{(k)}(\rho)=\frac{1}{\left(D^k_G\right)^2} \sum_{l=1}^{\left(D^k_G\right)^2} \tilde\gamma_l^{(k)}\rho\,\tilde\gamma_l^{(k)\dagger},
\end{equation}
and the probabilities $p_k$ are specified in full analogy to
\eqref{eq:SLTwirlGen} as $p_k=\frac{\beta_k}{D^k}$. Unitary-$1$-designs are
specified by choosing a unitary operator basis
$\{\gamma^{(k)}_l\}_{l=1}^{\left(D^k_G\right)^2}$ on space $\mathbb C^{D^k_G}$,
and further they are extended to act on the entire representation space using
basis operators $\hat\Pi_k^{m_1m_2}[\pi_{\mathcal K}]$ \eqref{PiBasis0} specified by generalized
group duality:
\begin{equation}
    \label{Gt0}
\tilde\gamma^{(k)}_l=\sum_{m_1,m_2=1}^{D^k_G}{\left(\gamma^{(k)}_l\right)}_{m_1m_2}\hat\Pi^{m_1 m_2}_k[\pi_{\mathcal K}].
\end{equation}
Utilizing the above formalism one can express the Choi matrix $J_{\mathcal
M^{\Phi}_{\mathcal G}}$ of a twirled channel \eqref{eq:TwirlChoiG0} in a
design-like form:
\begin{equation}
    \label{eq:finJMPhiG_design}
J_{\mathcal M^{\Phi}_{\mathcal G}}=\sum_{k\in\mathcal I}\left(\frac{\beta_k}{D^k}\right)\mathcal{D}_{\mathcal K}^{(k)}(J_{\Phi})
=\sum_{k\in\mathcal I}\left(\frac{\beta_k}{D^k}\right)\frac{1}{\left(D^k_G\right)^2} \sum_{l=1}^{\left(D^k_G\right)^2} \tilde\gamma_l^{(k)}J_{\Phi}\,\tilde\gamma_l^{(k)\dagger},
\end{equation}
where the operators \eqref{Gt0} are defined for $\hat\Pi^{m_1 m_2}_k[\pi_{\mathcal K}^{\ou}\otimes\bar{\pi}_{\mathcal
K}^{\ii}]$.
In principle in the above formulas one can use arbitrary unitary operator basis,
 however the typical choice due to its simplicity of construction is the
 Heisenberg-Weyl unitary operator bases \cite{BZ17}, comprising operators:
 $$\{\omega_{D^k_G}^{ij}Z_{D^k_G}^iX_{D^k_G}^j\}_{i,j=0}^{{D^k_G}-1}$$ in
 which the complex root of unity reads: $\omega_{{D^k_G}}=e^{2i\pi/{D^k_G}}$,
 and the operators $Z,X$ are unitary generalizations of Pauli matrices
 $\sigma_z,\sigma_x$.
 
\section{Channel $t$-designs}
\label{sec:designs}
The methods of averaging quantum channels described in previous sections are
based on generalized group duality and therefore, though finite, they demand
performing projections on irreducible subspaces related with representations of
either the original symmetry group for input and output ports or the commutant
of its action. One may expect a finite averaging method which respects the
operator structure of the original group actions, as in the case of unitary
$t$-designs. We will provide such a generalization utilizing Lemma
\ref{lem:UBar} for the following case:
\begin{eqnarray}
\pi_{\mathcal{G}}^{\ii}(G)&=&G^{\otimes t_{\ii}},\,\,G\in\mathcal{G},\nonumber\\
\pi_{\mathcal{G}}^{\ou}(G)&=&G^{\otimes t_{\ou}},\,\,G\in\mathcal{G},
\end{eqnarray}
in which $\mathcal G$ is an arbitrary subgroup of $\textrm{GL}(d,\mathbb C)$.
We utilize the following lemma:
\begin{lem}
    \label{lem:Gdesigns}
    Let $\{G_i,w_i\}$ be a (weighted) group $t$-design, namely:
    \begin{equation}
        \label{GdesingDef}
        \int_{\mathcal G}\operatorname{d}\!G\, G^{\otimes t}\rho\, G^{\otimes t\dagger}
        =\sum_{i\in\mathcal X}w_i G_i^{\otimes t}\rho\, G_i^{\otimes t\dagger}.
    \end{equation}
    Then the following holds:
    \begin{equation}
        \label{GdesingGam}
        \int_{\mathcal G}\operatorname{d}\!G\, \left[G^{\otimes t}\rho^{\Gamma}\, G^{\otimes t\dagger}\right]^{\Gamma}
        =\sum_{i\in\mathcal X}w_i \left[G_i^{\otimes t}\rho^{\Gamma}\, G_i^{\otimes t\dagger}\right]^{\Gamma}.
    \end{equation}
\end{lem}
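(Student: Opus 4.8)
The plan is to notice that obtaining the claimed identity \eqref{GdesingGam} from the defining $t$-design identity \eqref{GdesingDef} amounts to conjugating by a single fixed linear map — the partial transpose $\Gamma$ — so the equality transfers automatically. Write $\mathcal A_G(X)=G^{\otimes t}\,X\,G^{\otimes t\dagger}$ for the conjugation superoperator on $\mathcal B((\mathbb C^d)^{\otimes t})$, and let $\Gamma$ also denote the (linear, involutive) partial-transpose map on this finite-dimensional matrix space. The operation appearing inside the integral and the sum in \eqref{GdesingGam} is then exactly the composition $\Gamma\circ\mathcal A_G\circ\Gamma$, because $\big(\mathcal A_G(X^\Gamma)\big)^\Gamma=(\Gamma\circ\mathcal A_G\circ\Gamma)(X)$.

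First I would upgrade \eqref{GdesingDef} from a statement about density matrices to an identity of superoperators on all of $\mathcal B((\mathbb C^d)^{\otimes t})$. This is the routine spanning argument: density matrices real-span the Hermitian operators, which in turn complex-span the whole matrix algebra, and both $X\mapsto\int_{\mathcal G}\operatorname{d}\!G\,\mathcal A_G(X)$ and $X\mapsto\sum_{i\in\mathcal X}w_i\,\mathcal A_{G_i}(X)$ are linear in $X$; hence \eqref{GdesingDef} yields the operator identity $\int_{\mathcal G}\operatorname{d}\!G\,\mathcal A_G=\sum_{i\in\mathcal X}w_i\,\mathcal A_{G_i}$. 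The left-hand side is a Bochner integral of a matrix-valued function against the chosen normalized (Cartan) measure, so it is well defined and commutes with any fixed linear map on the target space. This step is genuinely needed, since $\rho^\Gamma$ is in general not positive and so does not itself qualify as an input to \eqref{GdesingDef}.

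Next I would compose this operator identity with $\Gamma$ on both sides. Since $\Gamma$ is linear (and automatically continuous, the space being finite dimensional), it passes through the integral and through the finite sum:
\begin{gather*}
\int_{\mathcal G}\operatorname{d}\!G\,\big(\Gamma\circ\mathcal A_G\circ\Gamma\big)
=\Gamma\circ\Big(\int_{\mathcal G}\operatorname{d}\!G\,\mathcal A_G\Big)\circ\Gamma \\
=\Gamma\circ\Big(\sum_{i\in\mathcal X}w_i\,\mathcal A_{G_i}\Big)\circ\Gamma
=\sum_{i\in\mathcal X}w_i\,\big(\Gamma\circ\mathcal A_{G_i}\circ\Gamma\big).
\end{gather*}
Evaluating the two ends at $\rho$ and unfolding $\Gamma\circ\mathcal A_G\circ\Gamma$ reproduces \eqref{GdesingGam} verbatim.

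I do not expect a real obstacle: the whole content is that partial transpose is a fixed linear operation and therefore commutes with averaging, whether the average is the Haar/Cartan integral or the finite weighted sum furnished by the $t$-design. The only point meriting a sentence of care is the passage from the state version of \eqref{GdesingDef} to the all-operators version, which is the linearity/spanning remark above; equivalently, one can simply say that $\rho\mapsto\big(\mathcal A_G(\rho^\Gamma)\big)^\Gamma$ depends linearly on $\mathcal A_G$, so any linear relation among the $\mathcal A_G$ — in particular the one supplied by the $t$-design — is inherited by its $\Gamma$-conjugate, which is precisely the assertion of the lemma.
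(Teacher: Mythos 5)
Your proof is correct and is essentially an expanded version of the paper's own (one-line) argument, which likewise rests on exactly the two points you make: the design identity, being linear in $\rho$, extends to all operators (so it may be applied to $\rho^\Gamma$), and the fixed linear map $\Gamma$ commutes with both the integral and the finite weighted sum. Your explicit spanning step and the superoperator formulation $\Gamma\circ\mathcal A_G\circ\Gamma$ merely make precise what the paper leaves implicit.
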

\begin{proof}
The above holds due to: (i) linearity of partial transpose, (ii) the fact that
\eqref{GdesingDef} holds for any $\rho$ (in fact $\rho$ is redundant in this
formula).
\end{proof}
Utilizing the above lemma we can define channel $t$-design, where
$t=t_{\ii}+t_{\ou}$ as a finite set of matrices and weights $\{G_i,w_i\}$ which
reconstruct the Choi matrix of an averaged channel:
\begin{eqnarray}
    \label{eq:DesignChoiG0SLT}
    J_{\mathcal M^{\Phi}_{\mathcal{G}}}
     &=&\int_{\mathcal{G}}\operatorname{d}\!G \left[\left(G^{\otimes(t_{\ii}+t_{\ou})}\right)J_{\Phi}^{\Gamma}\left(G^{\otimes(t_{\ii}+t_{\ou})}\right)^{\dagger}\right]^{\Gamma}\nonumber\\
     &=&\sum_{i\in\mathcal X}w_i\left[\left(G_i^{\otimes(t_{\ii}+t_{\ou})}\right)J_{\Phi}^{\Gamma}\left(G_i^{\otimes(t_{\ii}+t_{\ou})}\right)^{\dagger}\right]^{\Gamma}.
\end{eqnarray}
The above construction is very general and can be applied whenever one can construct $t$-design for a given symmetry group $\mathcal G$ and 
demanded value of $t=t_{\ii}+t_{\ou}$. In the case of $\mathcal G$ being a unitary group 
there is a huge literature on this topic \cite{Dankert05, Dankert09, Gross07, Roy09, Webb16, Nakata21}, whereas in the case of $\mathcal G=\textrm{SL}(2,\mathbb C)$ 
one can construct respective designs as product designs comprising unitary ones and Gauss-type quadratues, see \cite{Markiewicz21} for detailed presentation.

\paragraph{Relation to designs for channels.}
Several notions of designs for channels have been proposed in the literature,
often by pushing forward a design on a larger unitary group through a fixed
Stinespring dilation or by imposing moment-matching conditions on ensembles of
CPTP maps, see e.g. \cite{Czartowski25}. The channel $t$-design notion in Eq.~(43) is tailored to the present
twirling problem: it is the minimal finite condition required to reproduce the
averaged Choi operator for collective pre/post actions, and it follows directly
from the partial-transpose  trick and any weighted group $t$-design for
$t=t_{\rm in}+t_{\rm out}$. This perspective complements other channel-design
frameworks by making the symmetry and the induced representation on the Choi
space explicit.

\section{Conclusions and outlook}

We have shown that channel twirling is most naturally and generally formulated at the level of Choi operators, where it becomes a genuine group twirl under the induced output--input representation. From this perspective, channel averaging is no longer merely an operational pre- and post-processing prescription, but a representation-theoretic projection acting directly on the Choi operator space. This shift of viewpoint provides a unified and explicit description of channel twirling that applies to arbitrary input and output representations and does not rely on case-by-case constructions.

While symmetry-restricted quantum channels---covariant or equivariant---have been extensively studied, existing works typically focus on characterizing the set of invariant channels rather than on the twirling map itself. In contrast, the present work treats the twirl as an explicit linear superoperator acting on Choi operators. Concretely, the Haar (or Cartan) average defines a completely positive linear map whose range coincides with the commutant of the induced representation, so that the twirl realizes a projection onto this invariant subspace in the operator-algebraic sense. 
This distinction is crucial: instead of merely identifying the structural constraints that invariant channels must satisfy, we obtain a constructive formula for the averaging map itself, allowing one to compute the averaged channel directly rather than inferring its form indirectly from symmetry considerations.

In the collective setting, where the symmetry acts as tensor powers on multiple input and output subsystems, the relevant commutant is the walled Brauer algebra. A central insight of this work is that the technical complexity associated with mixed Schur--Weyl duality can be bypassed by a partial-transpose reduction: the channel twirl is mapped to an ordinary Schur--Weyl twirl of the partially transposed Choi operator under a single tensor-power representation. This reduction replaces contragredient actions by a final partial transpose and enables practical evaluation using standard permutation-operator techniques, without constructing explicit walled Brauer idempotents or mixed Schur transforms.

We further extended the Choi-level formulation of channel twirling beyond compact symmetries acting via unitary representations to reductive, generally non-compact groups via Cartan decomposition. In this setting, where uniform averaging is not available, the twirl decomposes into invariant-sector projections determined by the maximal compact subgroup, weighted by coefficients that depend solely on the Abelian Cartan component. To our knowledge, this provides the first systematic and constructive treatment of channel twirling for non-compact transformations at the Choi level, clarifying both its algebraic structure and its normalization properties.

Two finite realizations of channel averaging were also developed. First, we showed that the Cartan twirl admits a dual representation as a probabilistic mixture of unitary-$1$-design channels acting on invariant sectors through embedded operator bases. Second, motivated by unitary designs, we introduced a notion of channel $t$-design tailored to the present setting: any weighted group $t$-design with $t=t_{\mathrm{in}}+t_{\mathrm{out}}$ induces an exact finite reconstruction of the averaged Choi operator. Together, these constructions demonstrate that the abstract Choi-level twirl admits finite, symmetry-respecting implementations without leaving the representation-theoretic framework.

Several directions follow naturally from this work. On the constructive side, it
is important to develop efficient exact and approximate finite averaging sets
for broader classes of symmetry groups and representations. A key open problem
is to quantify the approximation error incurred when replacing Haar or Cartan
twirling by a finite weighted sum (design-like implementation). Concretely, one would like bounds on the deviation of the
resulting twirling superoperator from the ideal one, either at the Choi level
(e.g.\ in trace or Hilbert--Schmidt norm for Choi operators) or directly at the
channel level (e.g.\ in diamond norm), with explicit dependence on
$t_{\mathrm{in}}+t_{\mathrm{out}}$, the local dimension, and the quality/size of
the design.

From an algorithmic perspective, the partial-transpose reduction suggests
scalable implementations of channel twirls based on permutation operators and
Schur--Weyl projectors, with complexity controlled by $t_{\mathrm{in}}+t_{\mathrm{out}}$
rather than by the full mixed Schur--Weyl machinery. More broadly, treating
channel symmetries directly at the Choi level appears essential for extending
symmetry-based reductions to approximate symmetries, non-identical local
dimensions, and learning problems involving structured quantum noise.

\section*{Acknowledgements}
MM acknowledges support from the National Science Center (NCN), Poland, under
Project Opus No. 2024/53/B/ST2/02026. ZP and ŁP acknowledge support from the
National Science Center (NCN), Poland, under Project Opus No.
2022/47/B/ST6/02380.

\bibliographystyle{quantum}

\end{document}